\newtheorem{thm}{Theorem}[section]
\newtheorem{defn}[thm]{Definition}
\newtheorem{remark}[thm]{Remark}
\newenvironment{proof}{ \textbf{Proof:} }{ \hfill $\Box$}
\def\bb0{{\mathbb{0}}}
\def\b1{{\mathbf{1}}}
\def\bb{{\mathbf{b}}}
\def\b0{{\mathbf{0}}}
\def\bbE{{\mathbb{E}}}
\def\bbP{{\mathbb{P}}}
\def\bbR{{\mathbb{R}}}
\def\bbZ{{\mathbb{Z}}}
\def\cG{\mathcal{G}}
\def\sfN{\mathsf{N}}
\def\sf0{{\mathsf{0}}}
\def\1{{\bf 1}}
\def\SINR{{\mathsf{SINR}}}
\def\nn{\nonumber}
\newcommand{\helv}{%
    \fontfamily{phv}\fontseries{b}\fontsize{9}{11}\selectfont}
\newcommand{\be}{\begin{equation}}
\newcommand{\ee}{\end{equation}}
\newcommand{\bea}{\begin{eqnarray}}
\newcommand{\beas}{\begin{eqnarray*}}
\newcommand{\eea}{\end{eqnarray}}
\newcommand{\eeas}{\end{eqnarray*}}
\newcommand{\remove}[1]{}
\def\om{\omega}
\newcommand{\ep}{\epsilon}
\newcommand{\al}{\alpha}
\newcommand{\del}{\delta}
\def\ga{\gamma}
\def\th{\theta}
\def\1{\mathbf{1}}
\def\f{\frac}
\begin{document}

\begin{frontmatter}

\title{Achieving Non-Zero Information Velocity in Wireless Networks\protect}
%
%\title{Breaking the Poisson Infinite Delay Stranglehold in Space-Time SINR Graphs via Power %Control\protect}
\runtitle{Non-Zero Information Velocity in PPP}
%\thankstext{T1}{Footnote to the title with the `thankstext' command.}
%
\begin{aug}
  \author{\fnms{Srikanth}  \snm{Iyer}\corref{}\thanksref{t2}\ead[label=e1]{skiyer@math.iisc.ernet.in}},
  \author{\fnms{Rahul} \snm{Vaze}\thanksref{t3}\ead[label=e2]{vaze@tcs.tifr.res.in}}
  \thankstext{t2}{Research Supported in part by UGC center for advanced studies.}
  \thankstext{t3}{Research Supported in part by INSA young scientist award grant.}
  \runauthor{Iyer and Vaze}
  \affiliation{Indian Institute of Science, Bangalore and Tata Institute of Fundamental Research, Mumbai}
%  \address{Address of the First and Second authors,\\
    %      \printead{e1,e2}}
%
%
\end{aug}
\begin{abstract} In wireless networks, where each node transmits independently of other nodes in the network (the ALOHA protocol), the expected delay experienced
by a packet until it is successfully received at any other node is known
to be infinite for signal-to-interference-plus-noise-ratio (SINR) model with node locations
distributed according to a Poisson point process. Consequently,
the information velocity, defined as the limit of the ratio of the
distance to the destination and the time taken for a packet to
successfully reach the destination over multiple hops, is zero, as
the distance tends to infinity. A nearest neighbor distance based
power control policy is proposed to show that the expected delay
required for a packet to be successfully received at the nearest
neighbor can be made finite. Moreover, the information velocity is
also shown to be non-zero with the proposed power control policy. The condition under which these results hold does not depend on the intensity of the underlying Poisson point process.
\end{abstract}
\begin{keyword}[class=MSC]
\kwd[Primary ]{60D05}
\kwd[; secondary ]{60F15, 60K30}
\end{keyword}

\begin{keyword}
%\kwd{sample}
\kwd{wireless networks, SINR graphs, expected delay, information velocity}
\end{keyword}

\end{frontmatter}

\section{Introduction}
Typically, nodes in a wireless network are separated by large distances and packets are routed from source to their destination via many other nodes or over multiple hops. Therefore, to understand the connectivity or  information flow in a wireless network, a space-time SINR graph is considered. Such a graph models the evolution of the spatial as well as the temporal connections in the network. The space-time SINR graph is a directed and weighted multigraph that represents the most complete random graph model for wireless networks \cite{haenggibook}. The SINR (signal-to-interference-plus-noise-ratio)  is a ratio of the relative strength of the intended signal and the undesirable interference from simultaneously active unintended nodes of the wireless network. The SINR between any two nodes evolves with time and depends not only on the distance between the two nodes but also on the location of the other nodes in the network. At any time, a directional connection is established from a node at $x$ to another node at $y$ if the SINR from $x$ to $y$ is larger than a threshold. Such a connection represents the ability of node $x$ to deliver meaningful information to $y$.

Let $\Phi \subset \bbR^2$ be a point process that specifies the location of the nodes of the network. For any $t  \in \bbZ_+$, let $\Phi_T(t) \subset \Phi$ be the set of nodes that are transmitting at time $t$ and $\Phi_R(t) = \Phi \setminus \Phi_T(t)$ be the set of nodes in receiving mode at time $t$. Formally, the SINR from a node $x \in \Phi_T(t)$ to a node $y \in \Phi_R(t)$ is given by
\begin{equation}\label{eq:SINR1}
\SINR_{xy}(t) := \frac{P_x(t) h_t(x,y) \ell(x,y) }{ \ga \sum_{z \in \Phi_T(t) \backslash \{x,y\}} P_z(t)
h_t(z,y) \ell(z,y) + \sfN},
\end{equation}
where $\ell (.,.)$ is the distance based signal attenuation or {\it path-loss} function, $P_x(t)$ is the transmitted power from $x$ at time $t$, $\gamma$ is the interference suppression constant, $h_t(u,v),$ $u,v \in \Phi$, are the space-time fading coefficients that model the loss (or gain) from node $u$ to $v$ at time $t$ due to signal propagation via a wireless medium, and $\sfN$ is the variance of the so-called additive white Gaussian noise. By an abuse of notation, we will often use $\ell(|x-y|)$ for $\ell(x,y)$, since the path loss is a function of the distance between $x$ and $y$. The term $\sum_{z \in \Phi_T(t) \backslash \{x,y\}} P_t(z) h_t(z,y) \ell(z,y) $ in the denominator of \eqref{eq:SINR1} is referred to as the interference. Note that we do not include the nodes at $x,y$ in the interference term since transmission from node $x$ is the signal of interest and node $y$ is in receiving mode. $\SINR_{xy}(t)$ is set to be zero at time $t$ if either node $x \in \Phi_R(t)$ or if node $y \in \Phi_T(t)$. Define the indicator random variables
\begin{equation}
e_{xy}(t) := \left\{\begin{array}{ll} 1 & \mbox{if } \SINR_{xy}(t) > \beta, \\ 0 & \mbox{otherwise,}
\end{array}\right.,
\end{equation}
where $\beta > 0$ is arbitrary. The space-time SINR graph is defined to be the graph $ (\Phi\times \bbZ_+, E)$, where a directed edge exists from $(x,t)$ to $(y,t+1)$ if
$e_{xy}(t)=1$. Given $\Phi$, the location of the nodes is static, and the time evolution of the graph is entirely due to changes in the fading variables $h_t(u,v)$ and the set $\Phi_T$.

In this paper, we consider a space-time SINR graph in which the
location of the nodes is modeled as a homogeneous Poisson point
process (PPP). Modeling location of nodes in a wireless networks
as a PPP is quite attractive from an analytical point of view and
has paid rich dividends in terms of finding exact expressions for
several performance indicators such as maximum rate of
transmission (capacity), connection probability, etc.
\cite{Gupta2000, Weber2005, Baccelli2006, Andrews2011}, that are
hard to derive otherwise. A PPP node location model is well suited
for modeling both the ad hoc networks, where large number of
nodes are located in a large area without any
coordination, as well as the modern paradigm of cellular networks
\cite{Andrews2011}, where multiple different layers of
base-stations (BSs) (macro, femto, pico) are overlaid on top of
each other, and the union of all BSs appears to be uniformly distributed.

Given $\Phi$, the stochastic nature of the fading coefficients $h_t(\cdot,\cdot)$ and the set $\Phi_T(\cdot)$ implies that the event $e_{xy}(t)$ is a random variable, and hence potentially, multiple transmissions
are required for successfully transmitting a packet from node $x$
to $y$. 
%Under the same modeling assumptions made in this paper, \cite{Baccellispacetime2011} show that every packet will be successfully transmitted out of its current location to some other node in the network in finite time, with probability one. 
Repeated transmissions entail {\it delay} in packet
transmission, and it is of interest to make the expected delay as small
as possible. Another related quantity of interest is the {\it
information velocity}, that is defined as the limit of the ratio
of the distance between the source and the destination of any
packet, to the total delay experienced by the packet to reach its
destination successfully over multi-hops, as distance goes to
infinity.

Expected delay and the information velocity are closely
connected to the various notions of capacities in wireless
networks, e.g.,  throughput capacity \cite{Gupta2000}, transport
capacity\cite{Gupta2000}, delay-normalized transmission capacity
\cite{Andrews2009, VazeTDR2011} etc., since all of them are measures based on the
successful rate of departure of packets towards their destination.
Finding the speed of information propagation is also related to first
passage percolation \cite{kesten1986aspects,hammersley1965first},
and dynamic epidemic processes
\cite{durrett1999stochastic,mollison1977spatial,mollison1978markovian},
however, the analysis in the space-time SINR graph gets 
complicated due to the presence of interference. 
%In a multi-hop wireless
%network, each packet is forwarded hop by hop using a decode and
%forward strategy.

In the seminal paper \cite{Gupta2000}, it was shown that with
PPP distributed node locations (albeit for a somewhat simpler model), 
the per-node throughput (rate of
transmission between any two randomly selected nodes) or
information velocity tends to zero as the size of the network grows. The most general analysis on the
expected delay and the information velocity has
been carried out in \cite{Baccellispacetime2011} for a PPP-driven space-time SINR
network. It is shown that with an ALOHA protocol, where nodes transmit independently of all other nodes with fixed power, the expected delay required for
a packet to leave a given node and be successfully received at any
other node in the network is infinite. Remarkably, this result is shown to hold 
even in the absence of interference and requires only the additive
noise to be present. Moreover, the information velocity is also shown
to be zero. These result have tremendous 'negative' impact on
network design, since it shows that essentially any packet cannot
exit its source with finite expected delay.

Both the results from \cite{Baccellispacetime2011}, viz., the infinite expected
delay and zero information velocity, are attributed to the fact
that with PPP distributed node locations, a typical node can have large
voids around itself, that is, regions that contain no other nodes with high probability. In such a circumstance, a large number of 
retransmission attempts will be required to overcome the
effect of additive noise and support a minimum
SNR at any of the other receiving nodes. Consequently, the
mean exit delay is infinite (when averaged over the realizations of the PPP) 
and the information velocity tends to zero.

One solution prescribed in \cite{Baccellispacetime2011} to make
the information velocity non-zero is to add another regular square
grid of nodes with a fixed density, in which case the nearest
neighbor distance is bounded, and  the information velocity can be
shown to be non-zero. From a practical point of view it is rather limiting to
assume the presence of such a regular grid. 

Some work has been reported on finite expected delay together with
a finite bound on the information velocity
\cite{jacquet2009opportunistic, ganti2009bounds, haenggibook},
under restrictive assumptions such as assuming temporal
independence with the SINR model, i.e. interference is independent
between any two nodes over time, no power constraint and temporal
interference independence, and no additive white Gaussian noise,
respectively.

In this paper, we propose a power control mechanism to show that
the information velocity can be made non-zero for the space-time
SINR graph with PPP node locations without any additional restrictive assumptions on the network. In \cite{Baccellispacetime2011}, the information velocity is defined as the limit of the ratio of the distance between two points $x$ and $y$ to the time it takes for the packet to go from $x$ to $y$ as the distance tends to infinity. The packet simultaneously traverses multiple paths and the time taken is the first time the packet is received at $y$. This makes the set-up somewhat complicated to work with and so the results in \cite{Baccellispacetime2011} are proved for delays averaged over the fading variables. In order to overcome this problem, we modify the definition of information velocity by specifying a random path along which a tagged packet will traverse the network. Allowing for a larger set of paths and picking the one that is optimal as done in \cite{Baccellispacetime2011} will only increase the velocity. Thus our result provides a lower bound on the information velocity. We describe our idea in brief here and the precise definition will appear later. 

We track a tagged packet as it traverses the network via a {\it conic} forwarding strategy along a prescribed path that depends only on the realization of the underlying distribution of the nodes. Briefly stated, conic forwarding works as follows. At each
node, the $\bbR^2$ plane is partitioned into multiple cones, and
each node transmits the packet at the head of its queue to its nearest neighbor in
the cone that contains the packet's destination until it is successfully received. We refer to such a cone as
the destination cone. This conic forwarding idea
circumvents the problem of forming nearest neighbor loops, since
the packet always progresses towards its destination. This also
allows us to exploit the various independences that exist across
space and time. The speed of this tagged particle along the prescribed path is what we will refer to as the information velocity (the direction of motion being contained in the choice of the transmission cones).  If $d(t)$ is the distance that the tagged packet travels in time $t$, then $v = \liminf_{t \to \infty} \f{d(t)}{t}$ is the information velocity. The aim of this paper is to devise a power-control strategy under which $v > 0$.

The power control strategy works by nullifying the path-loss from a node towards its nearest neighbor in the destination cone. In particular, since the path-loss between node $x$ and its nearest
neighbor $n(x)$ in the destination cone is $\ell(|x-n(x)|)$,
the transmitted power $P$ is taken to be $c \ell(|x-n(x)|)^{-1}$, where $c$ is a
constant. Since the nearest neighbor in a PPP can be at arbitrarily large distances, we need $\ell(\cdot) > 0$. To compensate for the non-homogeneity in power used at
each node, we modulate the transmission probability $p$, such that
$pP$ equals the average power constraint at each node.

We wish to note that per se, power control is not a new concept in wireless communication. For instance, in CDMA type wireless network \cite{Booktse}, power control is mandatory to overcome the near-far effect, where mobiles that are nearer to the base station have to continuously update their transmitted power so as not to severely limit the transmission from mobiles that are farther away. However, the specific strategy that we use in this paper has not been considered in the literature. Further the use and advantages of power control in large wireless networks with randomly located nodes has not been fully explored. 

In terms of implementation, the conic forwarding needs no special effort, since the transmitter only adjusts the power according to the nearest neighbor distance in the destination cone, and the transmission is isotropic, i.e., does not require any information about the direction, circumventing the need for any special hardware, e.g., directional antennas etc. For power control, the transmitter needs to learn the distance to its nearest neighbor in the destination cone. Nearest neighbor routing \cite{yu2001geographical, borbash2007asynchronous} is standard in wireless communication networks, where packets are forwarded to the nearest neighbors, which requires discovery of nearest neighbors as well as their distances, and hence our power control policy does not entail any new overhead. 

Using conic forwarding strategy together with power control,
we show that the expected delay to the nearest neighbor in the
destination cone is finite provided $\beta \gamma < 1$. In addition, as the tagged packet traverses the network from one node to another along the path specified by the conic forwarding strategy, the sequence of time delays turns out to be a non-stationary sequence. In order to overcome this problem,
we add additional (virtual) nodes to the network as the packet moves from one node to another. The nodes are added in such a way that the path along which the packet traverses in not affected. The interference experienced by the particle increases (and consequently reduces its speed) but the technique delivers for us a stationary sequence of delay times. This enables us to apply the ergodic theorem and infer with probability one that the information velocity is strictly positive for the stationary sequence, and hence for the original sequence. It is important to note that these virtual nodes are not really required in practice to achieve non-zero information velocity, but are only used as a analytical tool to upper bound the per-hop delays (via increasing the interference) and making them stationary across different hops. 
\section{System Model}
Let $\Phi$ be a homogenous PPP with intensity $\lambda$ in
$\bbR^2$ modeling the location of the nodes of the network. The
time parameter is assumed to be discrete (slotted). Let
$\{h_t(x,y), x,y \in \Phi, t=0,1, \ldots \}$ be a collection of
independent exponentially distributed random variables with
parameter $\mu$. $h_t(x,y)$ is the fading power from
node $x$ to node $y$ in the time slot $t$. The path loss between
$x,y \in \Phi$ denoted by $\ell(x,y) = \ell(|x-y|)$ is given by
\[ \ell(r) = r^{-\al} \wedge 1, \qquad r > 0, \]
where $a \wedge b = \min(a,b)$ and $\al > 2$ is arbitrary.

We assume that each node can only operate in a half-duplex mode,
that is, in the time slot $t$, a node $x\in \Phi$ is {\it on}
(transmitter) or {\it off} (receiver) following a Bernoulli random
variable $\1_x(t)$, with $\bbP(\1_x(t) = 1) = p_x(t)$. Let
$q_x(t)=1-p_x(t)$. The set of {\it on} ({\it off}) nodes in the
time slot $t$ is denoted by $\Phi_T(t)$ ($\Phi_R(t)$).

\begin{figure}
\centering
\includegraphics[height=2.5in]{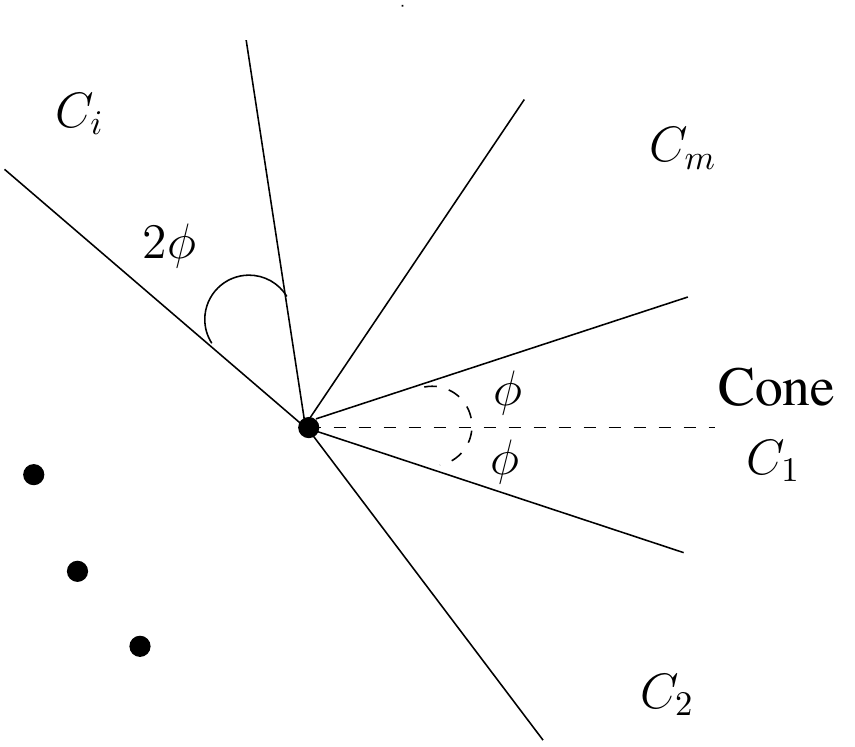}
\caption{Definition of cones with angle $2\phi$.}
\label{fig:conedef}
\end{figure}

\begin{figure}
\centering
\includegraphics[height=2.5in]{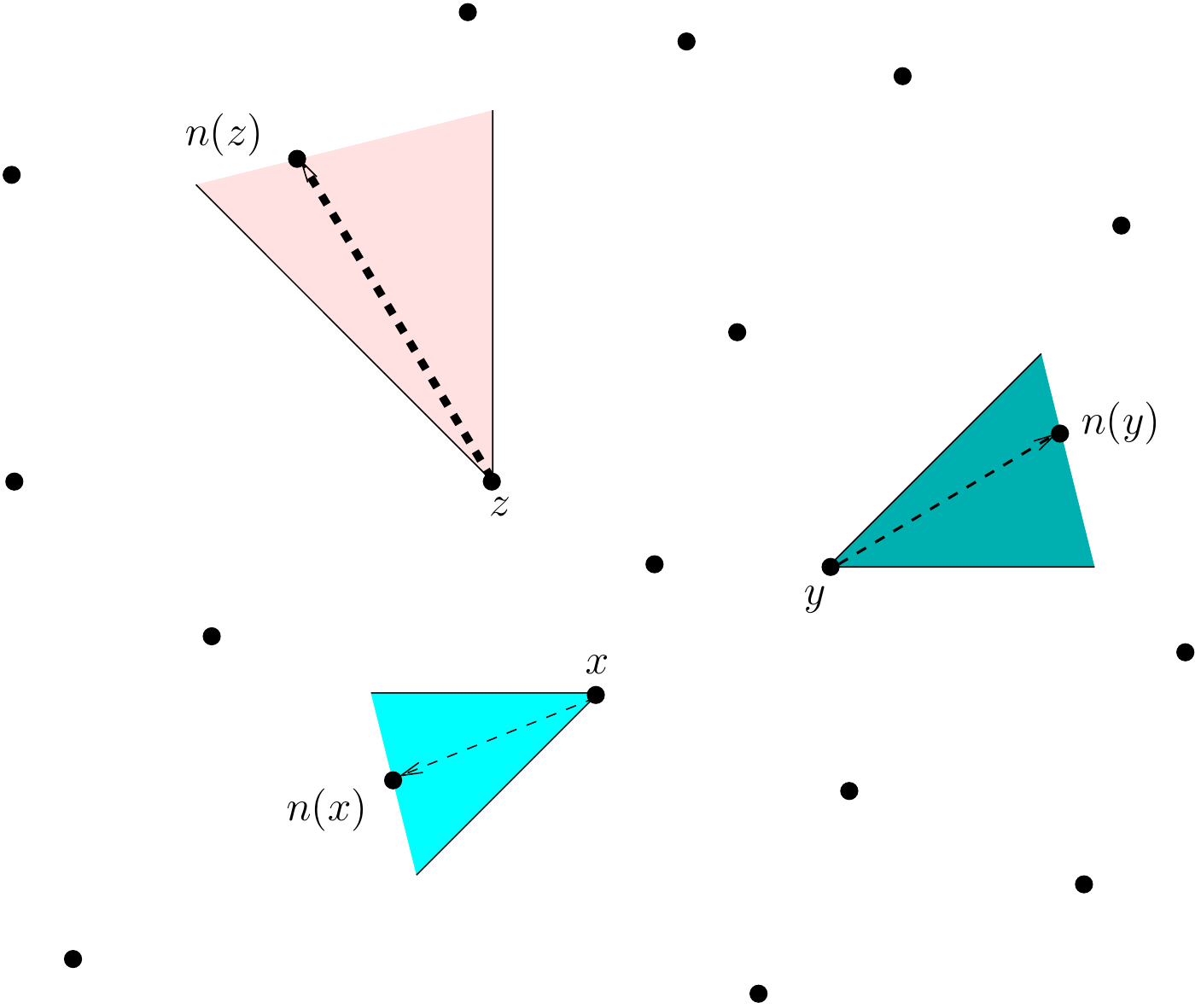}
\caption{Each node transmits to its nearest neighbor in the destination (shaded) cone.}
\label{fig:conetransmit}
\end{figure}

Let $C_1,\ldots, C_m$ be cones with angle $2\phi < \f{\pi}{2}$ in
$\bbR^2$ with vertex at the origin, satisfying $\cup_{i=1}^m C_i =
\bbR^2$ and $C_i \cap C_j = \emptyset$ for $i \ne j$, as shown in
Fig. \ref{fig:conedef}. Without loss of generality, suppose that
$C_1$ is symmetric about the x-axis and opens to the right.  Let
$x+C_i$ be translation of cone $C_i$ by $x$. In the time slot $t$,
for a node $x$, let $x+C_d(x,t)$ be the cone that contains the
final destination of the packet that it wishes to transmit. We call this cone as the
destination cone. Denote the nearest neighbor of $x$ in the
destination cone $x+C_d(x,t)$ by $n_t(x)$. 
If the node at $x$ is {\it on} in time slot $t$ then it transmits with power $P_x(t) $. 
The key idea in this paper is to employ a decentralized power control scheme, that is, the functions $p_x(t), P_x(t)$ depend locally on $\Phi$. The particular forms that these functions take are given by
\begin{equation}
P_x(t) = c \ell^{-1}(x,n_t(x)), \qquad p_x(t) = M (P_x(t))^{-1},
\label{eq:power_prob}
\end{equation}
where $c = M (1-\ep)^{-1}$, $0< \epsilon <1$ is a constant, and $M = P_x(t)p_x(t)$ is the average
power constraint. Note that $p_x(t) \leq 1 - \ep$, since $\ell(\cdot) \leq 1$. Thus, in each time slot, each node makes transmission attempts with transmission power proportional to the
distance to its nearest neighbor in the destination cone to
compensate for the path-loss to the nearest neighbor. The transmission
probability is chosen so as to satisfy an average power constraint.

In Fig. \ref{fig:conetransmit}, we illustrate the transmission
strategy, where each node transmits to its nearest neighbor in the
destination cone (shaded cone) with line thickness proportional to
the transmit power, farther the nearest neighbor, larger the
power. In prior work \cite{Baccellispacetime2011}, with the ALOHA protocol, the functions
$P$ and $p$ were assumed to be constants that were independent of the system parameters.

Thus, the SINR from node $x$ to node $y$
in time slot $t$ is given by
\begin{equation}\label{eq:SINR}
\SINR_{xy}(t) = \frac{P_x(t) h_t(x,y) \ell(x,y) \1_x(t) (1 -
\1_{y}(t))}{ \ga \sum_{z \in \Phi_T(t) \backslash \{x,y\}} P_z(t)
h_t(z,y) \ell(z,y) + \sfN},
\end{equation}
where $0 < \gamma < 1$ is the processing gain of the system (interference
suppression parameter) which depends on the transmission/
detection strategy, for example, on the orthogonality between
codes used by different legitimate nodes during simultaneous
transmissions.
The transmission from node $x$ to
$y$ is deemed successful at time $t$, if $\SINR_{xy}(t) > \beta$,
where $\beta > 0$ is a fixed threshold. Let $e_{xy}(t) = 1$ if
$\SINR_{xy}(t) > \beta$, and zero otherwise. The sum in the
denominator of the right hand expression in the above equation is
referred to as the interference and $\al > 2$ ensures that the
interference term in the denominator is finite almost surely.
Since $h_t(x,y)$ is exponentially distributed, multiple
transmissions may be required for a packet to be successfully
received at any node.

\section{Main Results and Proofs}
Our first objective is to show that with the power control
policy described above, the expected time for a packet to be successfully received
at the nearest neighbor in the destination cone is finite.

\begin{defn} Let the minimum time (exit time) taken by any
packet to be successfully transmitted from node $x$ to its nearest neighbor $n(x)$
in the destination cone of the packet be
$$T(x) = \min\left\{ t >0 : e_{x, n_t(x)}(t) =1  \right\}.$$
\end{defn}
Let $\bbP^x$ denote the Palm distribution of $\Phi$, conditioned to have a point at $x$ and let $\bbE^x$ denote expectation with respect to $\bbP^x$. By an abuse of notation we will use $\bbP$ and $\bbE$ to denote $\bbP^o$ and $\bbE^o$. This will cause no confusion since these are the only probabilities and expectations that are of interest.
\begin{thm}\label{thm:finiteexittime} [Finite expected exit time] Suppose $\beta \ga < 1$.
Then for all $\ep > 0$ sufficiently small, the space-time SINR
graph with power control policy as described above satisfies
$\bbE^x\{T(x)\} < \infty$.
\end{thm}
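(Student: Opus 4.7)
The plan is to compute the per-slot success probability $q(\Phi):=\bbP(e_{o,y}(1)=1\mid\Phi)$ under the Palm measure $\bbP^o$ (writing $y=n(o)$ and $r=|y|$), and then to bound $\bbE^o[1/q(\Phi)]$. Because the fading variables $\{h_t\}$ and the ALOHA marks $\{\1_z(t)\}$ are independent across time, the events $\{e_{o,y}(t)=1\}$ are i.i.d.\ given $\Phi$, so $T(o)$ is conditionally geometric and $\bbE^o[T(o)]=\bbE^o[1/q(\Phi)]$. First I would exploit the cancellation $P_o\,\ell(o,y)=c$ built into the power-control policy: conditioning on $\Phi$ and everything except $h_1(o,y)\sim\mathrm{Exp}(\mu)$ gives
\[
q(\Phi) = p_o(1-p_y)\,e^{-\mu\beta\sfN/c}\,L(\Phi), \qquad L(\Phi) := \bbE\bigl[e^{-(\mu\beta/c)I(1)}\,\big|\,\Phi\bigr].
\]
Since $1/p_o = (r^\al\vee 1)/(1-\ep)$ and $1-p_y\ge\ep$ (as $\ell\le 1$), this yields
\[
\bbE^o[T(o)] \le \f{e^{\mu\beta\sfN/c}}{\ep(1-\ep)}\,\bbE^o\!\left[\f{r^\al\vee 1}{L(\Phi)}\right],
\]
so the task reduces to controlling $\bbE^o[(r^\al\vee 1)/L(\Phi)]$.

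The main step is to bound $1/L(\Phi)$ as (the exponential of) a shot-noise functional. Factoring $L(\Phi)$ over interferers and integrating out first $h_1(z,y)$ and then $\1_z(1)$, each $z\in\Phi\setminus\{o,y\}$ contributes a factor
\[
K_z = 1 - \f{p_z\,\beta\ga P_z\,\ell(z,y)/c}{1+\beta\ga P_z\,\ell(z,y)/c}.
\]
The crucial simplification is that the power control gives $p_zP_z=M$ identically and $P_z/c=\ell(z,n(z))^{-1}$, whence
\[
K_z = 1 - \f{(1-\ep)\beta\ga\,\ell(z,y)}{1+\beta\ga\,\ell(z,y)/\ell(z,n(z))}.
\]
Using $\ell\le 1$ in the denominator and the elementary inequality $\log(1+a)-\log(1+\ep a)\le(1-\ep)a$ for $a\ge 0$, I would conclude $-\log K_z \le (1-\ep)\beta\ga\,\ell(z,y)$ uniformly in $\Phi$, and hence $L(\Phi)^{-1}\le \exp\bigl(C\!\sum_{z\ne o,y}\ell(z,y)\bigr)$ with $C=(1-\ep)\beta\ga$. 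The hypothesis $\beta\ga<1$, together with $\ep$ small, enters here to keep $K_z$ uniformly bounded away from zero so that this log-inequality produces a usable constant.

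Finally I would condition on $y$ and apply Slivnyak's theorem: given $y=n(o)$, the process $\Phi\setminus\{o,y\}$ is a PPP of intensity $\lam$ on $\bbR^2$ minus the cone-sector $\{u\in C_1:|u|\le r\}$. The PPP Laplace functional then gives
\[
\bbE^o\!\bigl[e^{C\sum_z\ell(z,y)}\,\big|\,y\bigr] \le \exp\!\Bigl(\lam\!\int_{\bbR^2}(e^{C\ell(u)}-1)\,du\Bigr),
\]
and the right-hand integral is finite because $\al>2$ makes $\ell(u)=|u|^{-\al}\wedge 1$ integrable at infinity; in particular the bound is independent of $y$. Combined with the fact that $r^2\sim\mathrm{Exp}(\lam\phi)$ under $\bbP^o$ (so $\bbE^o[r^\al\vee 1]<\infty$), this gives $\bbE^o[T(o)]<\infty$.

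The hard part will be the $K_z$ bound. The power-control policy makes $P_z$ an unbounded function of $\Phi$ (a node whose nearest neighbour in its destination cone is far transmits with very large power), so naively the interference from a single such transmitter could dominate. The saving feature is that once the identity $p_zP_z=M$ is used the leftover factor $1/\ell(z,n(z))$ appears in the denominator of $K_z$ rather than in the numerator: a node that would transmit with very high power is the same node that transmits only very rarely, and at the level of the Laplace transform of $I$ the two effects cancel. This is the structural reason the policy can deliver a finite expected exit time with no restriction on the intensity $\lam$.
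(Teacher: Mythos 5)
Your overall architecture is the same as the paper's: reduce $\bbE^o[T(o)]$ to the reciprocal of a per-slot success probability, factor the conditional Laplace transform of the interference over interferers, use the cancellation $p_zP_z=M$ and $\ell(z,n(z))\le 1$ to get a per-interferer factor bounded below uniformly in the interferer's cone choice, fill in the void created by the nearest-neighbour conditioning, and finish with the PPP exponential formula using $\al>2$. Two of your local choices are genuinely different and arguably cleaner. First, instead of the paper's Cauchy--Schwarz split of $\bbE\{J^{-1}\}$ into $\bbE\{p_o(1)^{-2}\}$ and a squared Laplace-transform term, you condition on $y=n(o)$, observe that the Laplace-functional bound is uniform in $y$, and pull out $\bbE[r^\al\vee 1]$ directly; this needs only the first moment of $r^\al$ and the bound on the remaining PPP over the complement of the sector is the integral version of the paper's device of adding the independent process $\Phi_0$ on the void. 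Second, your inequality $\log(1+a)-\log(1+\ep a)\le(1-\ep)a$ applied to $K_z\ge(1+\ep\beta\ga\ell)/(1+\beta\ga\ell)$ is correct (I checked: it reduces to $1+\ep a\ge\ep^2(1+a)$) and gives $-\log K_z\le(1-\ep)\beta\ga\,\ell(z,y)$ with no positivity constraint on $1-\beta\ga(1-\ep)\ell$; the paper's cruder bound $K_z\ge 1-c_1\ell(z,y)$ is where $\beta\ga<1$ is actually used. So, contrary to your closing remark, your version of the estimate does not need $\beta\ga<1$ at all -- which is fine for proving the stated theorem, but you should be aware that you are proving something slightly stronger, and should double-check you have not lost a constraint somewhere.

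The one genuine gap is the very first step: the claim that, given $\Phi$, the events $\{e_{o,y}(t)=1\}$ are i.i.d.\ so that $T(o)$ is conditionally geometric with parameter $q(\Phi)=\bbP(e_{o,y}(1)=1\mid\Phi)$. This is not justified, because the per-slot success probability depends on the destination cones chosen by the interfering nodes at time $t$ (through $P_z(t)$ and $p_z(t)$), and these cone choices are driven by the traffic at those nodes: they are neither $\Phi$-measurable nor independent across time slots. The paper handles exactly this by conditioning on the sigma-field $\cG_k$ of $\Phi$ together with all cone choices up to time $k$, and then replacing the interference by the worst case over cone configurations ($I^*$), which yields the uniform bound $\bbP(T(o)>k\mid\Phi)\le(1-J(\Phi))^k$. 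Your proof is repairable with no new ideas, precisely because your bound on $K_z$ is already uniform over the cone choice of $z$; but as written, the identity $\bbE^o[T(o)\mid\Phi]=1/q(\Phi)$ with $q$ the time-$1$ marginal success probability is false, and you need to redefine $q$ as the worst-case per-slot success probability and insert the extra conditioning to get the geometric tail bound.
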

\begin{remark}  The parameter $\beta$ controls per link data rate, larger the value of $\beta$, larger is the per link data rate. The condition $\beta \gamma < 1$ indicates that to support larger per-link data rate, one has to invest in getting a better (lower) interference suppression parameter, e.g. by lowering the chip rate in a wireless CDMA system. The condition $\beta \gamma < 1$ also indicates that there is no free lunch, i.e., if one wants larger data rate and finite expected exit time, one has to have better interference suppression capability. An interesting upshot of the proposed power control policy is that the condition required for the theorem to hold is independent of the intensity $\lambda$ of the PPP.
\end{remark}
\begin{proof}
Without loss of generality, suppose that the origin $o \in \Phi$, that is we will consider the point process under $\bbP^o$ which, as we noted above will be denoted by $\bbP$.
We tag a particular packet to be transmitted out of the node at
$o$, and suppose that the destination cone of this packet when it is at $o$ is $C_d$.
Denote the nearest neighbor of $o$ in $C_d$ by $n(o)$. We have dropped the time subscript from $n(o)$, since, as long as the packet is not successfully transmitted out of $o$ the destination cone remains the same. Let

\begin{equation}\label{eq:newSINR}
\SINR_{o,n(o)}(t) = \frac{P_o(t) h_t(o,n(o)) \ell(o,n(o)) \1_o(t)
(1 - \1_{n(o)}(t))}{\ga I(t)  + \sfN},
\end{equation}
where
\begin{equation}\label{eq:redintf}
I(t) = \sum_{z \in \Phi \backslash \{o, n(o) \}} \1_z(t)
P_z(t) h_t(z,n(o)) \ell(z,n(o)).
\end{equation}
Let $e_{o,n(o)}(t) =1$ if $\SINR_{o,n(o)}(t) > \beta$, and $0$
otherwise. Due to interference and the nature of the traffic arriving at the nodes, the choice of the destination cones are not independent across time slots at the same node as well as at different nodes. Hence to evaluate $\bbE\{T(o)\}$ we need to condition appropriately. Let
$\cG_k$ be the sigma field generated by the point process $\Phi$
and the choice of cones made at all nodes of $\Phi$ at times
$t=1,2, \ldots ,k$. Note that as long as the packet is not
successfully transmitted out of $o$, the transmission probability
$p_o(t)$ does not change. Let $F := \cap_{j=2}^k \{ p_o(j) =
p_o(1)\}$. Now
\begin{equation}\label{eq:dummy1}
 \bbP\left[T(o) > k \big| \Phi \right] =  \bbE \left\{\bbP\left[e^d_{o,n(o)}(t) =0, \; \forall \;
t=1,\dots, k \big| \cG_k \right] \1_F \Big| \Phi \right\}.
\end{equation}
Let $A(t)$ be the event that the origin $o \in \Phi_R(t)$, and
$B(t)$ be the event that $o \in \Phi_T(t)$, $n(o) \in \Phi_R(t)$
but $\SINR_{o,n(o)}(t) \le \beta$. Writing right hand side of \eqref{eq:dummy1}
 in terms of $A(t), B(t)$, and using the independence of
the fading powers and the conditional independence of the
transmission events, we get
\begin{equation}
\label{eq:indep} \bbP\left[T(o) > k \big| \Phi \right]  =  \bbE \left\{
\prod_{t=1}^k \bbP \left[ A(t) \cup B(t) \big| \cG_k \right] \1_F \Big| \Phi \right\}.
\end{equation}
On the event $F$, we have $\bbP(A(t)|\cG_k) = 1 - p_{o}(1)$ and
\begin{equation}
\bbP(B(t)|\cG_k) \label{eq:E2}
 = p_{o}(1) q_{n(o)}(t)
\left(1-\bbE\left\{\exp\left(- \f{\mu \beta}{c} (\sfN+ \ga
I(t))\right)\bigg{|}\cG_k\right\}\right),
\end{equation}
for $t=1,2, \ldots ,k$. \eqref{eq:E2} follows from
\eqref{eq:newSINR} by using the fact that $P_o(t) \ell(o,n(o)) =
c$ and taking expectation with respect to $h_t(o,n(o)) \sim
\exp(\mu)$. This yields
\begin{eqnarray} \nn
\bbP \left[ A(t) \cup B(t) \big| \cG_k \right] & \leq & 1- p_{o}(1) + p_{o}(1)
q_{n(o)}(t) \left(1-\bbE\left\{e^{- \f{\mu \beta}{c} (\sfN+ \ga I(t))}\Big{|}\cG_k \right\}\right)\\
\nn & = & 1 - p_{o}(1)p_{n(o)}(1)  - p_{o}(1) q_{n(o)}(t) e^{- \f{\mu \beta \sfN}{c}
} \bbE\left\{ e^{- \f{\mu \beta \ga}{c}I(t)} \Big|\cG_k \right\} \\
\label{eq:E3} & \leq & 1 - p_{o}(1) \, \ep \, e^{- \f{\mu \beta \sfN}{c}
}\, \bbE\left\{ e^{- \f{\mu \beta \ga}{c}I(t)} |\cG_k \right\},
\end{eqnarray}
where we have used the fact that $q_y(t) \geq \ep$. Let $a =
\f{\mu \beta \ga}{c}$. We now find a lower bound for the
expectation on the right hand side above that is independent of
the choice of the cone. To this end observe that
\begin{equation} \label{eq:mgfint}
\bbE\left\{ e^{- \f{\mu \beta}{c}I(t)} \Big|\cG_k \right\} = \prod_{z
\in \Phi \backslash \{o, n(o) \}} \bbE\left\{ e^{- a \1_z(t)
P_z^{(i)}(t) h_t(z,n(o)) \ell(z,n(o))} \Big|\cG_k \right\}
\end{equation}
Suppose node $z \in \Phi \backslash \{o, n(o) \}$ transmits using
cone $z+C_i$ in time slot $t$. This fixes the transmission
probability $p_z^{(i)}(t)$ and power $P_z^{(i)}(t)$ (where we have included the index $i$ to make the dependence on the cone explicit). Then
\begin{eqnarray} \nn \lefteqn{ \bbE\left\{e^{- \f{\mu \beta \ga}{c} \1_z(t) P_z^{(i)}(t)
h_t(z,n(o)) \ell(z,n(o))} \Big| \cG_k \right\}} \\
\nn & = & (1-p_z^{(i)}(t)) +  p_z^{(i)}(t) \bbE\left\{ e^{ -
a P_z^{(i)}(t) h_t(z,n(o)) \ell(z,n(o))} \Big| \Phi \right\} \\
\label{eq:bound_indiv_int} & = & (1-p_z^{(i)}(t)) + p_z^{(i)}(t) \f{c}{c
+ \beta \ga \ell(z,n(o)) P_z^{(i)}(t)}.
\end{eqnarray}
Let $C_z^* = C_z^*(\Phi)$ be the cone for which the right hand
expression in \eqref{eq:bound_indiv_int} is minimized, i.e., node $z$ causes maximum interference at $n(o)$ when its destination cone choice is $C_z^*$. Let
$p_z^*$, $P_z^*$ denote the corresponding transmission probability
and power respectively. Denote by $\1^*_z$ an independent Bernoulli random
variable with $\bbP[\1^*_z = 1] =p_z^*$. The cone $C_z^*$ maximizes the interference contribution at $n(o)$ due to transmission at $z$. Define
\begin{equation}
I^*(t) = I^*(t,\Phi) = \sum_{z \in \Phi \backslash \{o, n(o) \}}
\1^*_z P_z^* h_t(z,n(o)) \ell(z,n(o)). \label{eq:interference_max}
\end{equation}
Substituting $I^*(t)$ for $I$ in \eqref{eq:E3} along with the observation that
given $\Phi$, $I^*(t) \stackrel{d}{=} I^*(1)$ we get
\begin{equation}
\bbP \left[ A(t) \cup B(t) \big|\cG_k \right] \leq 1 - p_{o}(1) \, \ep \, e^{- \f{\mu \beta
\sfN}{c}} \, \bbE\left\{ e^{- a I^*(1)} \Big| \Phi \right\}.
\label{eq:E4}
\end{equation}
Substituting from \eqref{eq:E4} in \eqref{eq:indep}, we get
\begin{equation}\label{eq:J}
\bbP\left[T(o) >  k \big| \Phi \right]  \leq  (1-J)^k,
\end{equation}
where
\[ J = p_{o}(1) \ep e^{- \f{\mu \beta \sfN}{c}} \bbE\left\{\exp\left(-a I^*(1) \right) \big| \Phi\right\}. \]
The expected delay can then be written as
\begin{eqnarray*}\nn
\bbE\{T(o)\}& = & \sum_{k\ge 0} \bbP[T(o) >  k] \\
\nn & = & \bbE\left\{\sum_{k\ge 0} \bbP\left[T(o) > k \big| \Phi \right]\right\} \\
\nn & \leq & \bbE\{J^{-1}\}. 
\end{eqnarray*}
By the Cauchy-Schwartz inequality we get
\begin{equation}
\label{eq:CS} \bbE\{T(o)\} \le  \frac{e^{\f{\mu \beta
\sfN}{c}}}{\epsilon} \left(\bbE\left\{\f{1}{ \left( \bbE\left\{ e^{-a
I^*(1)} \big| \Phi \right\}\right)^2 } \right\} \;
\bbE\left\{p_{o}(1)^{-2}\right\} \right)^{\f{1}{2}}.
\end{equation}
From the definition of the transmission probability $p_{o}(t)$, we
get
\begin{equation}\label{eq:p(o)bound}
 \bbE[p_o(1)^{-2}] \leq \bbE\left\{  \left(\f{c}{M} \right)^{2} (|n(o)|^{2 \al} \vee 1) \right\} < \infty,
\end{equation}
since the nearest neighbor distance in a cone has density
\begin{equation}
f(r) = \f{2\lambda \pi r}{m} e^{- \f{\lambda \pi}{m}r^2}, \qquad r
> 0. \label{pdfnnd}
\end{equation}
It remains to show that
\begin{equation} \bbE\left\{\f{1}{ \left( \bbE\left\{ e^{-a I^*(1)} \big| \Phi \right\} \right)^2
} \right\} < \infty. \label{eqn:finite_mean_int}
\end{equation}
\begin{equation}
\bbE\left\{ e^{-a I^*(1)} \big| \Phi \right\}  = \prod_{z \in \Phi
\backslash \{o, n(o) \}} \bbE \left\{ e^{- a \1^*_z P_z^*
h_1(z,n(o)) \ell(z,n(o))} \big| \Phi  \right\}.
\label{eqn:cond_exp_int}
\end{equation}
Taking expectations, first with respect to $\1^*_z$ and then with respect to $h_1(z,n(o))$, we get
\begin{eqnarray*}
\bbE \left\{ e^{- a \1^*_z P_z^* h_1(z,n(o)) \ell(z,n(o))} | \Phi
\right\} & = & (1-p^*_z) + p^*_z \bbE \left\{ e^{- a
P_z^* h_1(z,n(o)) \ell(z,n(o))} \big| \Phi \right\} \\
 & = & 1 - p_z^* \left( 1 - \f{\mu}{\mu + a P_z^* \ell(z,n(o))}
\right) \\
 & \stackrel{(a)}= & 1 - \f{\beta \ga p_z^* P_z^* \ell(z,n(o))}{c + \beta \ga P_z^*
\ell(z,n(o))} \\
 & \stackrel{(b)}\geq & 1 - \f{\beta \ga M \ell(z,n(o))}{c} = 1 - \beta \ga (1-\ep)
 \ell(z,n(o)),
\end{eqnarray*}
where $(a)$ follows by substituting $\f{\mu \beta \gamma}{c}$ for $a$, and to obtain $(b)$ we have used the fact that the average power $p_z P_z$ equals $M$ for $\forall \ z$ and in 
particular $p^*_z P^*_z = M$ and $c = M(1-\ep)^{-1}$. Let $c_1=\beta \ga (1-\ep)$. Substituting the above bound in (\ref{eqn:cond_exp_int}) we get
\begin{equation}
\bbE\left\{ e^{-a I^*(1)} \big| \Phi \right\} \geq \prod_{z \in \Phi
\backslash \{o, n(o) \}} \left( 1 - c_1 \ell(z,n(o)) \right).
\label{eqn:bound_cond_exp_int}
\end{equation}
Note that $c_1 \ell(z,n(o)) < 1$ since $\beta \gamma < 1$. Let $B(x,r)$ denote a ball of radius $r$ centered at $x$. Substituting the bound obtained in (\ref{eqn:bound_cond_exp_int}) in (\ref{eqn:finite_mean_int}),
we get
\begin{equation}
\bbE\left\{\f{1}{ \left( \bbE\left\{ e^{-a I^*(1)}| \Phi \right\}
\right)^2 } \right\}  \leq  \bbE\left\{ \prod_{z \in (\Phi
\backslash \{o, n(o) \})  \cup \Phi_0} e^{ - 2 \log \left( 1 - c_1
\ell(|z|) \right)} \right\}, \label{eq:expintfbound1}
\end{equation}
where the last inequality follows by shifting the origin to $n(o)$ and including points from an independent Poisson process $\Phi_0$ of intensity $\lambda \1_{\{(o+C_d) \cap B(o,|n(o)|)\}}$, i.e., a PPP of intensity $\lambda$ restricted to the set $(o+C_d) \cap B(o,|n(o)|)$. 
Clearly $(\Phi \backslash \{o, n(o) \}) \cup \Phi_0$ is a PPP of intensity $\lambda$ with the origin at $n(o)$. Hence by an application of the Campbell's theorem in (\ref{eq:expintfbound1}) and the fact that
$\ell(|z|) \leq 1$ we get
\begin{eqnarray}
\bbE\left\{\f{1}{ \left( \bbE\left\{ e^{-a I^*(1)}| \Phi \right\}
\right)^2 } \right\}  & \leq & \exp\left( \lambda \int_{\bbR^2}
\left( e^{-2 \log \left( 1 - c_1 \ell(|z|) \right)}
- 1 \right) {\mathrm d}z \right) \nn \\
 & \leq & \exp \left( \f{2 \lambda c_1}{(1-c_1)^2} \int_{\bbR^2} \ell(|z|)
{\mathrm d}z \right) < \infty, \label{eqn:finite_exp_int}
\end{eqnarray}
since $\al > 2$. This completes the proof of Theorem~\ref{thm:finiteexittime}.
\end{proof}

Next, we build upon Theorem \ref{thm:finiteexittime}, to show that
the information velocity, that is, the rate at which packets flow towards
their destination, is positive under the proposed power control
mechanism. Information velocity is a key quantity in multi-hop
routing. Larger the velocity, higher is the capacity of the network.
The negative results in \cite{Baccellispacetime2011} on the infinite expected delay and zero information velocity are proved for delays that are averaged over the fading variables. In order to work with the delay variables directly and also to be able to use the ergodic theorem, we introduce several additional structures as we go along. 

As a first step we track the movement of a tagged packet that starts
at the origin $X_0 = o \in \Phi$ and traverses the network as
follows. Let $T_0$ be the time taken by this tagged packet starting at the
origin to successfully reach its nearest neighbor $X_1 = n(o)$ in
the destination cone $C_1$. The packet is transmitted with power $P_1 = c \ell(|n(o))|)^{-1}$
and the probability that it is transmitted in a time slot is $MP_1^{-1}$. Going
forward, if the packet is at node $X_{i-1}$, $i \geq 2$, let $T_{i-1}$ be the
time taken for the packet to successfully reach the nearest
neighbor $X_{i}$ of $X_{i-1}$ in the destination cone $X_{i-1} + C_1$. From the time the packet reaches $X_{i-1}$ to the time it is successfully delivered at $X_i$, it is transmitted with
power $P_i = c \ell(|X_i - X_{i-1})|)^{-1}$ and transmission probability $MP_i^{-1}$. One can think of the
destination of the packet being located at $(\infty,0)$ and thus the destination cone for this packet is always a translation of the cone $C_1$. We ignore the queuing and other delays at each node as is the standard practice. Note that the delays $T_i,$ $ i \geq 0,$ are not identically distributed. For instance, the point process as seen from the origin and from $n(o)$ do not have the same distribution, since the area of �destination cone between $o$ and $n(o)$ contains no other point of $\Phi$. In particular, $\{T_i, i \geq 0\}$ is not a stationary sequence.

\begin{defn}
The information velocity of space-time SINR network is defined as
\[ v  = \liminf_{t \to \infty} \f{d(t)}{t},\]
where $d(t)$ is the distance of the tagged packet from the origin at time $t$.
\end{defn}

The following is the main result of this paper.

\begin{thm} \label{thm:positiveinfovelocity}
Under the conditions of Theorem~\ref{thm:finiteexittime} and the
proposed power control strategy, the information velocity $v
> 0$, almost surely.
\end{thm}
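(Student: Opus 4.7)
The plan is to reduce the claim to an application of Birkhoff's ergodic theorem along the hop-indexed sequence produced by the conic forwarding policy. Let $X_0 = o, X_1, X_2, \ldots$ denote the relays traversed by the tagged packet and write $T_i$ for the per-hop delay to travel from $X_i$ to $X_{i+1}$ (so that $T_0 = T(o)$) and $D_i = |X_{i+1} - X_i|$. Since the destination cone $C_1$ is symmetric about the $x$-axis and has half-angle $\phi < \pi/4$, each jump contributes at least $D_i \cos\phi$ to the $x$-component of $X_n - X_0$, all contributions having the same sign, so
\[
d(t) \;\ge\; \cos\phi \, \sum_{i=0}^{N(t)-1} D_i ,
\]
where $N(t)$ is the number of hops completed by time $t$. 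Hence it suffices to exhibit finite constants $\tau_\star, d_\star \in (0,\infty)$ such that, almost surely, $n^{-1}\sum_{i<n} T_i \to \tau_\star$ and $n^{-1}\sum_{i<n} D_i \to d_\star$; one then has $N(t)/t \to 1/\tau_\star$, and therefore $v \ge \cos\phi\cdot d_\star/\tau_\star > 0$.

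\textbf{The main obstacle: lack of stationarity.} The sequence $\{(T_i,D_i)\}_{i \ge 0}$ is not stationary under the Palm measure at the origin, because after $i$ hops the sets $(X_{j-1}+C_1)\cap B(X_{j-1},D_{j-1})$, $1\le j \le i$, are known to contain no Poisson points; the environment seen from $X_i$ therefore has a ``wake'' of known-empty cones and is not distributed as the environment seen from $o$. Following the remark in the introduction, I would repair this by augmenting $\Phi$ with virtual nodes drawn from an independent PPP of intensity $\lambda$ restricted to precisely those known-empty regions, so that the augmented process $\tilde\Phi$ seen from $X_i$ has the same law as $\Phi$ under $\bbP$. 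Crucially, every virtual point lies \emph{outside} the current destination cone $X_i + C_1$, so the true nearest neighbor $X_{i+1}$ and the jump length $D_i$ are unchanged by the augmentation; the $\{X_i\}$ path is preserved. The virtual nodes do however add independent transmitter--fading pairs to the interference sum \eqref{eq:redintf} at every slot, so, under the natural coupling of fading variables and on/off marks, the augmented per-hop delays $\tilde T_i$ dominate $T_i$ pointwise. By construction the hop-shift $\theta:(\tilde\Phi,X_i)\mapsto(\tilde\Phi,X_{i+1})$ preserves the Palm law, and the sequence $\{(\tilde T_i, D_i)\}_{i\ge 0}$ becomes stationary (ergodicity follows from the mixing of the underlying PPP together with the independence of the fading and transmission marks across hops).

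\textbf{Finite and positive means via Theorem~\ref{thm:finiteexittime}.} With virtual nodes included, the interference is produced by a PPP of intensity at most $2\lambda$, so the argument of Theorem~\ref{thm:finiteexittime} carries through verbatim with $\lambda$ replaced by $2\lambda$ in the Campbell step leading to \eqref{eqn:finite_exp_int}; in particular the bound there remains finite under the same hypothesis $\beta\gamma < 1$ (for $\epsilon$ sufficiently small), which yields $\bbE \tilde T_0 < \infty$. Positivity and finiteness of $\bbE D_0$ follow directly from the density \eqref{pdfnnd} of the nearest-neighbor distance in a cone. Birkhoff's theorem then gives, almost surely,
\[
\frac{1}{n}\sum_{i=0}^{n-1} \tilde T_i \longrightarrow \bbE \tilde T_0 \in (0,\infty), \qquad \frac{1}{n}\sum_{i=0}^{n-1} D_i \longrightarrow \bbE D_0 \in (0,\infty).
\]
Since $T_i \le \tilde T_i$, we obtain $\liminf_{t\to\infty} N(t)/t \ge 1/\bbE \tilde T_0 > 0$, and combined with the second limit this yields $v \ge \cos\phi\cdot \bbE D_0 / \bbE \tilde T_0 > 0$ almost surely.

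\textbf{Where the difficulty lies.} The routine parts -- bounding per-hop delay and per-hop displacement in mean -- are already covered by Theorem~\ref{thm:finiteexittime} and by elementary Poisson calculations. The conceptually delicate step, and the one I expect to require the most care, is the virtual-node augmentation: verifying simultaneously that (i) the realized path $\{X_i\}$ is measurable with respect to, and invariant under, the augmentation, (ii) the augmented interference process dominates the true one pointwise under an explicit measurable coupling, and (iii) the resulting hop-indexed dynamical system is genuinely stationary and ergodic. Once this coupling is set up cleanly, the rest of the argument is a direct appeal to Birkhoff's theorem together with Theorem~\ref{thm:finiteexittime}.
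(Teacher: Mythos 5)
Your skeleton matches the paper's: dominate the per-hop delays by a stationary sequence obtained by adding virtual interferers, apply Birkhoff's theorem, and reduce the finite-mean claim to the machinery of Theorem~\ref{thm:finiteexittime}. But your augmentation is incomplete, and the gap sits exactly at the point you yourself flag as delicate. Filling the known-empty cones with independent Poisson points restores the ``generic'' part of the environment, but it does nothing about the previously visited relays $X_0,\dots,X_{i-1}$ themselves: these are points of $\Phi$ sitting in the wake of the packet at locations determined by the path, they contribute to the interference at every subsequent hop, and their number grows with $i$. So even after void-filling, the delay at hop $0$ sees no such special points behind it while the delay at hop $i$ sees $i$ of them; the sequence is stochastically increasing in $i$, not stationary, and the hop-shift does not preserve the law as you claim. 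The paper's fix is to prepend a bi-infinite backward chain $\tilde\Phi=\{X_{-i},\, i\ge 1\}$ of virtual relays built from i.i.d.\ copies of $(R,\theta)$, so that at every hop the packet has an infinite, translation-invariant chain of special interferers behind it (in addition to the void-fills $\Phi_j$); only then is $\{T_i'\}$ stationary.

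This omission also hides the one genuinely new estimate in the theorem. With the backward chain present, $\bbE[T_0']<\infty$ is \emph{not} ``Theorem~\ref{thm:finiteexittime} with $\lambda$ replaced by $2\lambda$'': one must separately control the contribution of the special points, namely $\bbE\{\exp(\sum_{n\ge 1} g(S_n))\}$ where $S_n=\sum_{j=0}^{n-1}R_{-j}\cos(\theta_{-j})$ is (a lower bound on) the distance to the $n$-th backward relay and $g(x)=-4\log(1-c_1\ell(x))$. The paper handles this with a Chernoff bound $\bbP[S_n<n\delta]\le e^{-\zeta(\delta)n}$, Borel--Cantelli, and a choice of $\delta$ small enough that $\zeta(\delta)>g(0)$, with $\alpha>2$ entering again through $\sum_n g(n\delta)<\infty$. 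Without this step the proof is not complete. (Two smaller points: the union of the void-fills with $\Phi$ has intensity $\lambda$, not $2\lambda$, away from the special points, so the intensity-doubling shortcut is not the right accounting; and you do not actually need ergodicity --- stationarity plus finite mean already gives an a.s.\ finite Birkhoff limit $T'$, and $v\ge \bbE[R\cos\theta]/T'>0$ follows.)
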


\begin{proof}
In order to prove this result, we first dominate the delays
$\{T_i, i \geq 0\}$, by a stationary sequence $\{T_i^{'}, i \geq
0\}$, and show that a positive speed can be obtained even with
these enhanced delays. This will be done by adding some additional points that will increase the interference and hence the delay. To this end, for all $i \geq 0$, let $R_i =
|X_{i+1} - X_i|$ and $\th_i = \arcsin ((X_{i+1,2}-X_{i,2})/R_i)$,
where $X_i = (X_{i1},X_{i2})$. Note that the cones 
$\{(X_i+C_1) \cap B(X_i, R_i), i \geq 0\}$ are non-overlapping since $\phi < \pi/4$.
Consequently, $\{(R_i,\th_i), i \geq 0\}$ is a sequence of independent and
identically distributed random vectors having the same
distribution as the random vector $(R, \th)$, where $R$ and $\th$
are independent with density of $R$ given by (\ref{pdfnnd}) and
$\th$ is uniformly distributed on $(-\phi, \phi).$

To nullify the effect of moving to the nearest neighbor, we progressively fill the voids with independent Poisson points as the packet traverses the network. This however leaves an increasing sequence of special points, the $X_i$\rq{}s in the wake of the tagged packet. The following construction is intended to take care of this issue and deliver a stationary sequence.

\begin{figure}
\centering
\includegraphics[height=2.5in]{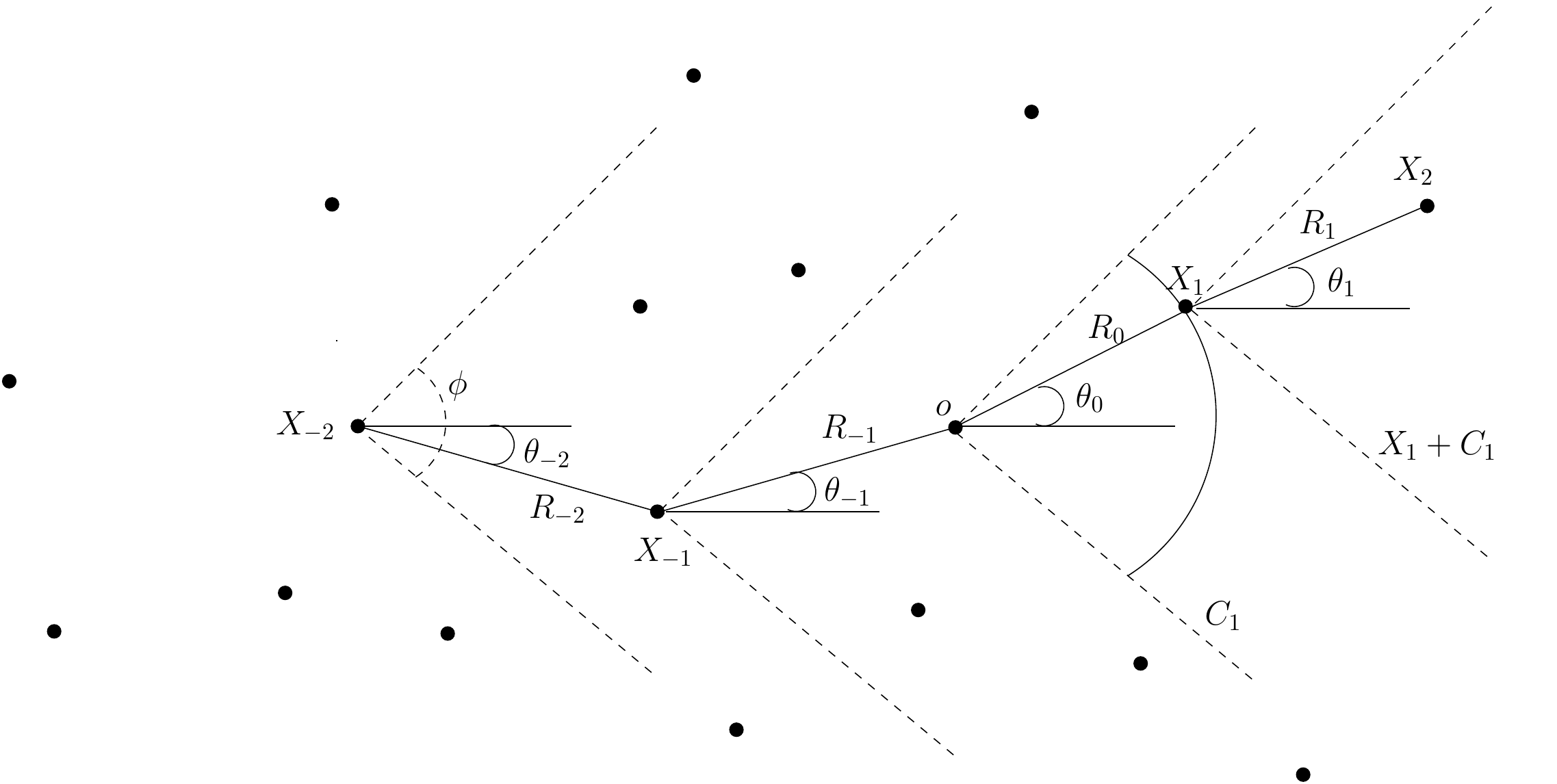}
\caption{Illustration of addition on infinite sequence of points/interferers to make $T_i'$ stationary.} \label{fig:biinfext}
\end{figure}

Let $\{(R_{-i}, \th_{-i}, i \geq 1\}$ be a sequence of independent
random vectors with each vector having the same distribution as
$(R,\th)$. Define $\tilde{\Phi} = \{X_{-i}, i \geq 1\}$,
recursively starting from $X_{-1}$ so as to satisfy $|X_{-i} -
X_{-i+1}| = R_{-i}$ and $\th_{-i} = \arcsin ((X_{-i+1,2}-X_{-i,2})/R_{-i})$. Observe that
each $X_{-i+1},$ $i \geq 1$,  lies in the cone $X_{-i} + C_1$ as shown in Fig. \ref{fig:biinfext}, and $\{\Phi \cap ((X_{-i}+C_1) \cap B(X_{-i},R_{-i})), i \geq 1\}$ is a sequence of independent and identically distributed random variables. 

Let $T_0^{'}$ be the delay experienced by
the tagged packet in going from $X_0$ to $X_1$ when the
interference is coming from the points of $(\Phi\setminus \{o, n(o)\})
\cup \tilde{\Phi}$. For $i \geq 0$, let $\Phi_i$ be an PPP of intensity $\lambda \1_{\{(X_{i}+C_1)\cap B(X_i,R_i)\}}$ independent of everything else. 
For $i \geq 1$, let $T_i^{'}$ be the delay experienced by the
tagged particle in going from $X_i$ to $X_{i+1}$ when the
interference is coming from the nodes in $ (\Phi
\setminus \{X_i, X_{i+1}\}) \cup \tilde{\Phi} \cup_{j=0}^{i-1} \Phi_j$. Note that for the actual delay $T_i$, that is, when the packet is at $X_i$ and trying to reach $X_{i+1}$, the interference contribution is coming from the nodes in $\Phi \setminus \{X_i,X_{i+1}\}$. To define $T_i^{'}$, we have added additional interferers at $\tilde{\Phi} \cup_{j=0}^{i-1} \Phi_j$. We assume that virtual interferers placed at $\tilde{\Phi} \cup_{j=0}^{i-1} \Phi_j$ behave similar to nodes of $\Phi$. Clearly $T_i^{'} \geq
T_i$, and furthermore the sequence $\{T_i^{'}, i \geq 0 \}$ is a stationary sequence. To prove the later assertion, consider any finite dimensional vector of delays $(T_ {i_1}^{\rq{}}, T_{i_2}^{\rq{}}, \ldots ,T_{i_j}^{\rq{}})$. The distribution of this vector is a function of the distribution of the special points $\{X_i, i \leq i_j\}$, the point processes $(\Phi \setminus \cup_{i \leq i_j} X_i)$ and $\cup_{i = 0}^{i_j -1} \Phi_i$, which by our construction is translation invariant.

Suppose we showed that $\eta = \bbE[T_i^{'}] < \infty$. Then by the
Birkoff's ergodic theorem \cite{ergodictheorybook}, we have
\[ \lim_{n \to \infty} \f{1}{n} \sum_{k=0}^{n-1} T_k^{'} = T^{'},\]
almost surely, where $T^{'}$ is a random variable with mean
$\eta$.

Let $N(t)$ be the counting process associated with an arrival
process with inter-arrival times given by the sequence $\{T_i^{'},
i \geq 0\}$. Then the information velocity satisfies
\[ v \geq \lim_{t \to \infty} \f{\sum_{k=1}^{N(t)} R_k \cos(\th_k)}{\sum_{k=1}^{N(t)+1}
T_k^{'}} = \f{\bbE[R\cos(\th)]}{T^{'}}. \]
The result now follows since $T^{'}$ is finite almost surely. 

It remains to show that $E[T_0^{'}] < \infty$.
The proof of this assertion proceeds along the same lines as the proof of
Theorem~\ref{thm:finiteexittime} with $I(t)$ in (\ref{eq:redintf})
replaced by $I(t) + \tilde{I}(t)$, where
\[ \tilde{I}(t) = \sum_{z \in \tilde{\Phi}} 1_z(t) P_z(t)
h_t(z,n(o)) \ell(z,n(o)). \]
This would lead to a bound analogous to (\ref{eq:CS}) with
$I^*(1)$ replaced by $I^*(1) + \tilde{I}^*(1)$ and $\Phi$ replaced
by $\Phi \cup \tilde{\Phi}$, where $\tilde{I}^*(1)$ is defined
analogous to $I(1)$. By the conditional independence of $I^*(1)$
and $\tilde{I}^*(1)$ we get
\[ \bbE\left\{\f{1}{ \left( \bbE\left\{ e^{-a (I^*(1) +
\tilde{I}^*(1))}|\Phi \cup \tilde{\Phi}\right\} \right)^2 }
\right\} = \bbE\left\{\f{1}{ \left( \bbE\left\{ e^{-a I(1)}| \Phi
\right\} \bbE\left\{ e^{-a \tilde{I}(1)}| \tilde{\Phi} \cup \{n(o)
\} \right\} \right)^2 } \right\} \]
Another application of the Cauchy-Schwartz inequality implies that
the result follows if we show that
\[ \bbE\left\{\f{1}{ \left(
\bbE\left\{ e^{-a I(1)} \big| \Phi \right\} \right)^4} \right\}
\bbE\left\{ \f{1}{ \left( \bbE\left\{ e^{-a \tilde{I}(1)} \big|
\tilde{\Phi} \cup \{n(o) \} \right\} \right)^4 } \right\} <
\infty. \]
Proceeding as in
(\ref{eq:expintfbound1})-(\ref{eqn:finite_exp_int}), we get
\[ \bbE\left\{\f{1}{ \left( \bbE\left\{ e^{-a
I(1)} \big| \Phi \right\} \right)^4} \right\} \leq \exp \left( \f{
\lambda}{(1-c_1)^4} \int_{\bbR^2} \left( 1 - (1-c_1 \ell(|z|))^4
\right) {\mathrm d}z \right) < \infty, \]
since $\al > 2.$ It remains to show that
\begin{equation} \bbE\left\{\f{1}{ \left( \bbE\left\{ e^{-a \tilde{I}(1)} \big|
\tilde{\Phi} \cup \{n(o) \} \right\} \right)^4 } \right\} <
\infty. \label{eqn:finite_mean_add_int}
\end{equation}
To compute the expression in (\ref{eqn:finite_mean_add_int}) we proceed as we did in
(\ref{eqn:cond_exp_int})-(\ref{eqn:bound_cond_exp_int}) and arrive at the following bound similar to the one in (\ref{eq:expintfbound1}).
\begin{eqnarray} \nn
\bbE\left\{\f{1}{ \left( \bbE\left\{ e^{-a \tilde{I}(1)} \big|
\tilde{\Phi} \cup \{n(o) \} \right\} \right)^4 } \right\}  & \leq
& \bbE\left\{ \prod_{i=1}^{\infty} e^{-4 \log \left( 1 - c_1
\ell(X_{-i}, n(o)) \right)} \right\}  \\
 \nn & \leq & \bbE\left\{ \prod_{i=1}^{\infty} e^{-4 \log \left( 1 - c_1
\ell(\sum_{j=0}^i R_{-j} \cos(\th_{-j}))\right) } \right\} \\
\label{eq:expboundaddlintf} & \leq & \bbE\left\{ e^{
\sum_{n=1}^{\infty} g(S_{n+1})} \right\},
\end{eqnarray}
where $S_n = \sum_{j=0}^{n-1} R_{-j} \cos(\th_{-j})$ and $g(x) =
-4\log(1-c_1 \ell(x))$, $x > 0$. Let $\xi = \bbE\{R \cos(\th)\}$
and note that $\xi
> 0$. Let $\del \in (0, \xi)$ be a constant that will be chosen
later. Define $\chi(\nu) = \bbE\{ e^{\nu R \cos(\th)} \}$, $\nu \in \bbR$ and let $\zeta(\del) = \inf\{ \nu \del - \log (\chi(\nu)), \nu > 0 \}.$ That $\chi(\nu) < \infty$ for all $\nu$ follows from (\ref{pdfnnd}). By the Chernoff bound, we have
\[ \bbP\left[ \f{S_n}{n} < \del \right] \leq e^{-\zeta(\del) n}. \]
It follows by the Borel-Cantelli lemma that, almost surely, there exists a $N=N(\om) < \infty$ such that $S_n \geq n \del$ for all $n \geq N(\om)$. Hence for some constant $c_2 > 0$,
\begin{eqnarray} \nn
P[ N \geq m] & = & \bbP[ S_n < n \del \mbox{ for some } n \geq m],
\\ \label{eq:chernoffbound}
 & \leq & \sum_{n=m}^{\infty} e^{- \zeta(\del) n} \leq c_2 e^{- \zeta(\del)
 m}.
\end{eqnarray}
Using the fact that the function $g$ is non-increasing, we get
\begin{eqnarray*}
\bbE\left\{ e^{\sum_{n=1}^{\infty} g(S_n)} \right\} & = & \bbE\left\{
e^{\sum_{n=1}^{N} g(S_n) + \sum_{n=N +
1}^{\infty} g(S_n)} \right\}, \\
 & \leq & e^{\sum_{n=1}^{\infty} g(n\del)} \bbE\left\{ e^{g(0)
 N} \right\}.
\end{eqnarray*}
$\sum_{n=1}^{\infty} g(n\del) < \infty$ since $\al > 2$ by the
comparison test. Since $R \cos(\th) > 0$, $\zeta(\del) \uparrow
\infty$ as $\del \downarrow 0$. So, we can and do choose
$\del$ such that $\zeta(\del)
> g(0)$. With this choice of $\del$, it follows from
(\ref{eq:chernoffbound}) that $E\left\{ e^{g(0) N} \right\} <
\infty$. This proves (\ref{eqn:finite_mean_add_int}).
%
%If the path loss function has compact support then the sequence
%$\{T_k^{'}\}_{k \geq 0}$ is also ergodic \cite{ergodictheorybook},
%and consequently $T^{'} \equiv \eta$, and consequently $v \ge
%\f{E\{R\cos(\th)\}}{\eta}$, almost surely.
\end{proof}

%\begin{remark} By the strong law of large numbers, $\f{X_n}{n} \to (\bbE\{R\cos(\th)\},0)$, and thus the asymptotic direction of motion of the tagged packet defined as $\lim_{n \to \infty} \f{X_n}{|X_n|}$  is indeed $(1,0)$.
%\end{remark}
%
%\begin{remark}
%The time taken for the tagged packet to perform $n$-hops or to reach $X_n$ satisfies
%%
%\[ \limsup_{n \to \infty} \f{\sum_{i=0}^{n-1} T_i}{n} \leq \eta. \]
%\end{remark}

\begin{center}
{\bf Conclusion}
\end{center}
In this paper, we have proposed a new power control strategy and a non-ALOHA protocol to achieve finite expected time (delay) for a packet to successfully reach its nearest neighbor with the SINR model. In prior work, it is known that the expected time for a packet to leave its source is infinite with an ALOHA protocol, severely limiting the effectiveness of such wireless networks. 
The power control strategy chooses power to completely overcome the path-loss effect towards the nearest neighbor in a defined cone that contains the destination, 
and attempts transmissions with appropriate probability so as to satisfy an average power constraint at each transmitter. In addition to achieving finite expected delay, we also show that our policy achieves non-zero information velocity, that is defined as the ratio of the successfully covered 
 distance and the delay needed to reach that distance, as time goes to infinity. As a result, packets can flow between any source-destination pair over multiple hops at a non-zero rate. Some outstanding questions that remain, are: what is the optimal choice of the angle of cone, larger the cone angle shorter is nearest neighbor distance and per-hop delay but requires more hops till the destination and vice versa, whats the best lower bound on the per-hop delay and the upper bound on the information velocity.

\begin{center}
{\bf Acknowledgments}
\end{center}

This paper has benefited from numerous useful discussions with Manjunath Krishnapur and D. Yogeshwaran. 

\bibliographystyle{IEEEtran}
\bibliography{IEEEabrv,Research}
\end{document}